\newtheorem{thm}{Theorem}
\newtheorem{lem}{Lemma}
\newtheorem{Remark}{Remark}
\theoremstyle{definition}
\theoremstyle{remark}
\newcommand{\bh}{\bar{h}}
\newcommand{\bbR}{\mathbb{R}}
\newcommand{\calC}{{\mathcal C}}
\newcommand{\calF}{{\mathcal F}}
\newcommand{\arcoth}{\operatorname{arcoth}}
\newcounter{mnotecount}[section]
\renewcommand{\themnotecount}{\thesection.\arabic{mnotecount}}
\newcommand{\mnote}[1]
{\protect{\stepcounter{mnotecount}}$^{\mbox{\footnotesize $%
\!\!\!\!\!\!\,\bullet$\themnotecount}}$ \marginpar{
\raggedright\tiny\em $\!\!\!\!\!\!\,\bullet$\themnotecount: #1} }
\title{Spherical linear waves in de Sitter spacetime}
\author{Jo\~ao L. Costa$^{(1,3)}$, Artur Alho$^{(2)}$ and Jos\'e Nat\'ario$^{(3)}$\\\\
{\small $^{(1)}$Instituto Universitário de Lisboa (ISCTE-IUL), Lisboa, Portugal}\\
{\small $^{(2)}$Centro de Matem\'atica, Universidade do Minho, Gualtar, 4710-057 Braga, Portugal}\\
{\small $^{(3)}$Centro de An\'alise Matem\'atica, Geometria e Sistemas Din\^amicos,}\\
{\small Instituto Superior T\'ecnico, Universidade T\'ecnica de Lisboa, Portugal}
}
\begin{document}

\maketitle

\begin{abstract}
We apply Christodoulou's framework, developed to study the Einstein-scalar field equations in spherical symmetry, to the linear wave equation in de Sitter spacetime, as a first step towards the Einstein-scalar field equations with positive cosmological constant. We obtain an integro-differential evolution equation which we solve by taking initial data on a null cone. As a corollary we obtain elementary derivations of expected properties of linear waves in de Sitter spacetime: boundedness in terms of (characteristic) initial data, and a Price law establishing uniform exponential decay, in Bondi time, to a constant.

\end{abstract}

\section{Introduction}

The study of the linear wave equation
\begin{equation}
\label{onda}
\square_g\phi=0
\end{equation}
on fixed backgrounds $(M,g)$ is a stepping stone to the analysis of the nonlinearities of gravitation.
In this paper we apply Christodoulou's framework, developed in~\cite{CommMathPhys1986}, to spherically symmetric solutions of~\eqref{onda} on a de Sitter background, as a prerequisite to the study of the coupled Einstein-scalar field equations with positive cosmological constant in spherical symmetry (which will be pursued elsewhere). If the cosmological constant vanishes then the uncoupled problem~\eqref{onda} is trivial\footnote{This can be seen from the fact that operator $\cal{F}$ in equation~\eqref{defIntF}, whose fixed points are the solutions of~\eqref{onda}, is a constant operator for $\Lambda = 0$; when perturbing to the nonlinear problem this operator becomes a contraction for small initial data.}, a fact that Christodoulou explored in~\cite{CommMathPhys1986} to solve the coupled case for suitably small initial data. For positive cosmological constant, however, the uncoupled case is more complicated\footnote{In this case the operator $\cal{F}$ is not even a contraction in the full domain.}, and it is essential to understand it thoroughly in order to ascertain how much freedom is there when perturbing it to the nonlinear case, as well as to determine which decays to expect and which function spaces to use.

Following Christodoulou, we turn~\eqref{onda} into an integro-differential evolution equation, which we solve by taking initial data on a null cone. This step, which is trivial in the case of vanishing cosmological constant, turns out to be quite subtle for positive cosmological constant. As a corollary we obtain elementary derivations of expected properties of linear waves in de Sitter spacetime: boundedness in terms of (characteristic) initial data, and uniform exponential decay, in Bondi time, to a constant (from which exponential decay to a constant in the usual static time coordinate easily follows; such boundedness and decay results may be seen, respectively, as analogues of the Kay-Wald theorem~\cite{KayWald} and of a Price law~\cite{Price, Dafermos}, both originally formulated for Cauchy data in a Schwarzschild background). Although widely expected from the stability of de Sitter spacetime \cite{Friedrich}, we are unaware of a written proof of uniform exponential decay. Similarly, the bound that we obtain for the solution in terms of the $C^0$ norm of the (characteristic) initial data is, to the best of our knowledge, original (notice that in particular the bound \eqref{hBound} involves no ``loss of derivatives"). Another novel aspect of our work is the fact that our results apply to a domain containing the cosmological horizon in its interior, therefore including both a local and a cosmological regions (as opposed to considering only the local region). This allows us to determine decays for initial data which lead to uniform exponential decay in time of the solution.

Numerical evidence for pointwise exponential decay can be found in~\cite{Brady}, and references therein, where higher spherical harmonics are also studied, as well as the non-linear system. Yagdjian and Galstian~\cite{CommMathPhys2009} constructed the fundamental solutions of~\eqref{onda} in de Sitter spacetime and proved exponential decay of certain homogeneous Sobolev $L^p$ norms, $2 \leq p<\infty$. Also along these lines, Ringstr\"om \cite{Ringstrom} obtained exponential decay for non-linear perturbations of locally de Sitter cosmological models in the context of the Einstein-nonlinear scalar field system with a positive potential. Finally, exponential pointwise decay in the local region between the black hole and the cosmological horizons in a Schwarzschild-de-Sitter spacetime follows from the papers by Dafermos and Rodnianski~\cite{arxiv0709.2766, DRlectures} (see also~\cite{Hafner}).


\section{Christodoulou's framework for spherical waves}
\label{sectionBondi}

Bondi coordinates~\cite{CommMathPhys1986} $(u,r,\theta,\varphi)$ map the causal future of any point in de Sitter spacetime isometrically onto
$\left([0,\infty)\times[0,\infty)\times S^2,g\right)$, where
\begin{equation}
 g=-\left(1-\frac{\Lambda}{3}r^{2}\right)du^{2}-2dudr+r^{2}d\Omega^{2}\;,
\label{dS_Bondi}
\end{equation}
with $d\Omega^{2}$ the round metric of the two-sphere (cf.~Figure~\ref{Penrose}).

\begin{figure}[h!]
\begin{center}
\psfrag{r=0}{$r=0$}
\psfrag{i}{$i$}
\psfrag{H}{$\mathcal{H}$}
\psfrag{I+}{$\mathscr{I^+}$}
\epsfxsize=.4\textwidth
\leavevmode
\epsfbox{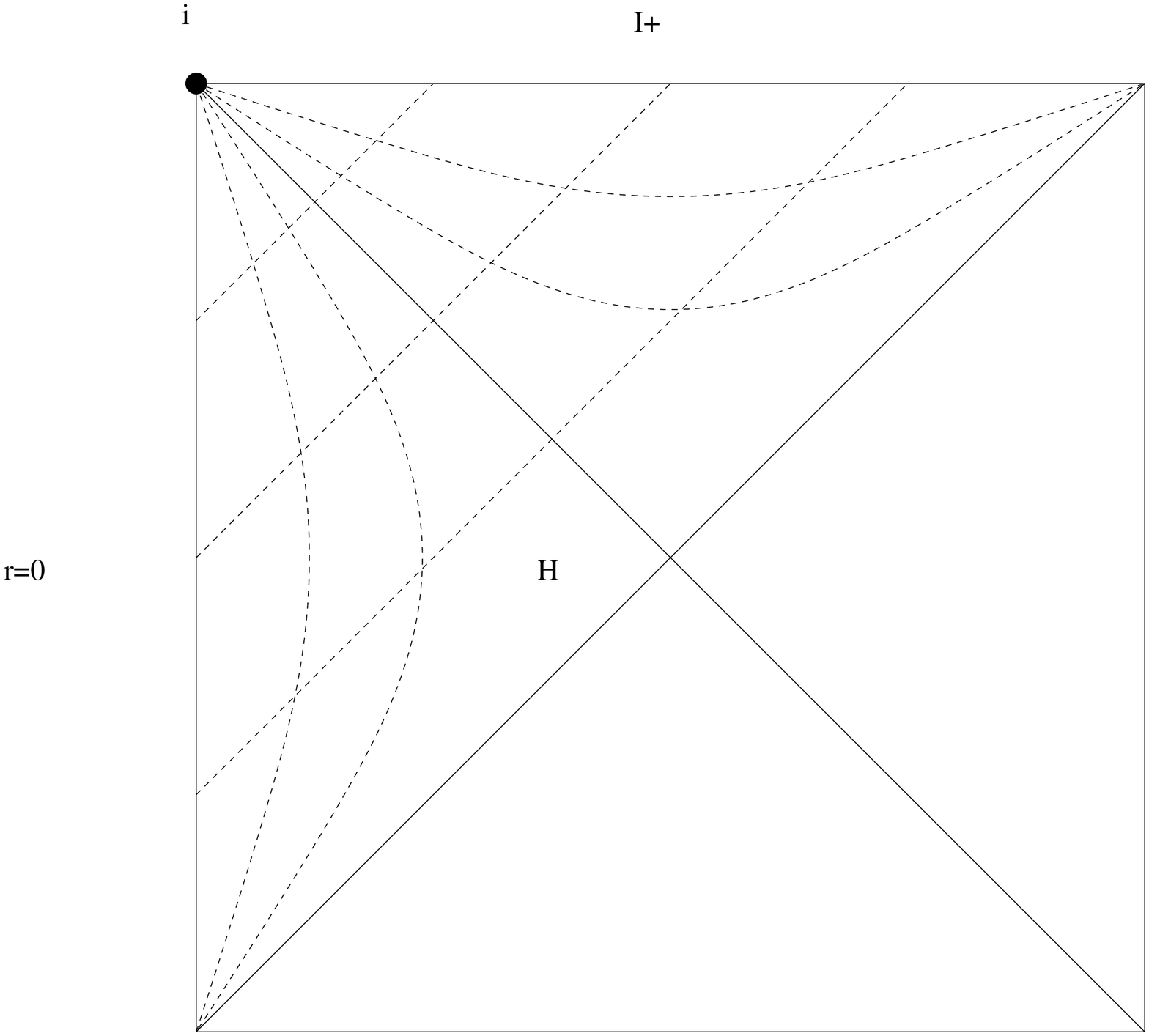}
\end{center}
\caption{Penrose diagram of de Sitter spacetime. The lines $u=\text{constant}$ are the outgoing null geodesics starting at $r=0$. The point $i$ corresponds to $u=+\infty$, the cosmological horizon $\mathcal{H}$ to $r=\sqrt{\frac3\Lambda}$ and the future null infinity $\mathscr{I^+}$ to $r=\infty$.} \label{Penrose}
\end{figure}

In these coordinates the wave equation
\begin{equation*}
\square_g\phi=0\Leftrightarrow\partial_{\mu}\left(\sqrt{-\det (g)}\,\partial^{\mu}\phi\right)=0\;,
\end{equation*}
for spherically symmetric functions, $\partial_\theta\phi=\partial_\varphi\phi=0$, reads
\begin{equation}
\label{wave_equation_dS}
-2r\frac{\partial}{\partial r}\left(\frac{\partial\phi}{\partial u}\right)-2\frac{\partial\phi}{\partial u}+r\left(1-\frac{\Lambda}{3}r^{2}\right)\frac{\partial^{2}\phi}{\partial r^{2}}+\left(2-\frac{4}{3}\Lambda r^{2}\right)\frac{\partial\phi}{\partial r}=0\;.
\end{equation}
Following Christodoulou~\cite{CommMathPhys1986} we consider the change of variable
\begin{equation*}
    \label{hdef}
    h:=\frac{\partial}{\partial r}\left(r\phi\right)\;.
\end{equation*}
If we assume that
$$\lim_{r\rightarrow 0} r\phi=0\;,$$
it immediately follows that
\begin{equation}
\phi=\bar{h}:=\frac{1}{r}\int^{r}_{0}h\left(u,s\right)ds\quad\text{ and }\quad\frac{\partial\phi}{\partial r}=\frac{\partial\bar{h}}{\partial r}=\frac{h-\bar{h}}{r}\;.
\label{Definitionh}
\end{equation}
Moreover, assuming that the crossed partial derivatives of $r\phi$ commute, we see that~\eqref{wave_equation_dS} is equivalent to
\begin{equation}
\label{mainEq}
    Dh=-\frac{\Lambda}{3}r(h-\bar{h})\;,
\end{equation}
where $D$ is the differential operator given by
\begin{equation*}
 D:=\frac{\partial}{\partial u}-\frac{1}{2}\left(1-\frac{\Lambda}{3}r^{2}\right)\frac{\partial}{\partial r}\;.
\end{equation*}

\section{Main result: statement and proof}

 Our main result is the following

\begin{thm}
\label{mainThm}
Let $\Lambda> 0$. Given $h_0\in\mathcal{C}^{k}([0,\infty))$, for some $k\geq1$, the problem
\begin{equation}
\label{mainEqChar}
\left\{
\begin{array}{l}
  Dh = -\frac{\Lambda}{3}r(h-\bar{h}) \\
  h(0,r) = h_0(r)
\end{array}
\right.
\end{equation}
has a unique solution  $h\in\mathcal{C}^{k}([0,\infty)\times[0,\infty))$.

Moreover, if $\|h_0\|_{\mathcal{C}^0}$ is finite\footnote{Recall that if $f: X \to \bbR$ is continuous and bounded then $\|f\|_{\mathcal{C}^0}=\sup_{x \in X} |f(x)|$.} then
\begin{equation}\label{hBound}
   \|h\|_{\mathcal{C}^0}=\|h_0\|_{\mathcal{C}^0}\;.
\end{equation}

Also, if $\|(1+r)^p \partial_rh_0\|_{\mathcal{C}^0}$ is finite for some $0\leq p\leq 4$ and $H\leq2\sqrt{\frac{\Lambda}{3}}$ then
\begin{equation}\label{dhBound}
   \|(1+r)^pe^{Hu}\partial_rh\|_{\mathcal{C}^0}\lesssim  \|(1+r)^p \partial_rh_0\|_{\mathcal{C}^0}\;,
\end{equation}
and, consequently, there exists $\underline{h}\in\mathbb{R}$ such that
%
\begin{equation}\label{expDecay}
    |h(u,r)-\underline{h}|\lesssim(1+r)^{n(p)}e^{-Hu}\;,
\end{equation}
with
\begin{equation}\label{lp}
n(p)=\left\{
\begin{array}{ccc}
0 & , & 2 < p\leq 4 \\
2   & , & 0\leq p \leq 2
\end{array}\right.\;.
\end{equation}
\end{thm}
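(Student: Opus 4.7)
The approach is to recast \eqref{mainEqChar} as an integral equation along the characteristics of $D$ and then derive each assertion in turn. The characteristics are the ingoing null geodesics, integral curves of $D$ solving $\dot R(u) = -\tfrac12(1 - \tfrac{\Lambda}{3}R^2)$; writing them down explicitly (via $\tanh$ below the horizon $R = \sqrt{3/\Lambda}$, the stationary solution at the horizon, and $\coth$ above) shows that every point of $[0,\infty)^2$ lies on a unique backward characteristic reaching $\{u=0\}$, so the map $(u,r)\mapsto r_0(u, r)$ giving the initial value of that characteristic is well defined. With integrating factor $e^{J(u)}$, $J(u) := \int_0^u \tfrac{\Lambda R(s)}{3}\, ds$, problem \eqref{mainEqChar} becomes the Volterra identity
\begin{equation*}
h(u, R(u)) = h_0(r_0)\, e^{-J(u)} + \int_0^u e^{-(J(u) - J(s))}\, \tfrac{\Lambda R(s)}{3}\, \bar h(s, R(s))\, ds,
\end{equation*}
to which a Picard iteration in a $\mathcal{C}^0$ norm weighted by $e^{-\sigma u}$ (for $\sigma$ large enough) yields local existence, uniqueness, and persistence of $\mathcal{C}^k$ regularity; global existence in $u$ then follows from the a priori bound \eqref{hBound}.

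For \eqref{hBound}, the key observation is that $\int_0^u e^{-(J(u)-J(s))}\tfrac{\Lambda R(s)}{3}\,ds = 1 - e^{-J(u)}$, so the Volterra identity writes $h(u, R(u))$ as a convex combination (weights $e^{-J(u)}$ and $1 - e^{-J(u)}$) of $h_0(r_0)$ and past values of $\bar h$. Since $\bar h(s, r) = \tfrac1r\int_0^r h(s, s')\, ds'$ is itself a convex combination of values of $h$, iterating produces a representation of $h(u, r)$ as an average of values of $h_0$ on the initial surface (with respect to a probability measure built from the iterated kernels), so $\|h\|_{\mathcal{C}^0} \leq \|h_0\|_{\mathcal{C}^0}$; the reverse inequality is trivial from $h(0, \cdot) = h_0$.

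For \eqref{dhBound} the crucial simplification is a cancellation at the level of the differentiated equation: using $r\,\partial_r \bar h = h - \bar h$ and differentiating \eqref{mainEqChar} in $r$, the quantity $k := \partial_r h$ satisfies the purely \emph{local} equation
\begin{equation*}
Dk = -\tfrac{2\Lambda}{3}\, r\, k,
\end{equation*}
with no integral term. Integration along a characteristic gives the explicit formula $k(u, R(u)) = h_0'(r_0)\, \exp\!\bigl(-\tfrac{2\Lambda}{3}\int_0^u R(s)\, ds\bigr)$, so \eqref{dhBound} reduces to showing, region by region, that $(1 + R(u))^p e^{Hu}\exp\!\bigl(-\tfrac{2\Lambda}{3}\int_0^u R(s)\, ds\bigr)$ stays bounded by a constant multiple of $(1 + r_0)^p$. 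The constraint $H \leq 2\sqrt{\Lambda/3}$ is saturated on the characteristic sitting at the horizon ($R \equiv \sqrt{3/\Lambda}$, so $\tfrac{2\Lambda}{3}R = 2\sqrt{\Lambda/3}$), while the range $0\leq p\leq 4$ is dictated by the cosmological region, where $R$ blows up in finite Bondi time and the polynomial growth of $(1+R)^p$ must still be dominated by the exponential decay factor.

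Finally, \eqref{expDecay} follows from \eqref{dhBound} by integration in $r$ together with an observation at the centre. Evaluating \eqref{mainEqChar} at $r = 0$ gives $\partial_u h(u, 0) = \tfrac12\, \partial_r h(u, 0)$, which by \eqref{dhBound} at $r = 0$ is $O(e^{-Hu})$; hence $h(u, 0)$ is Cauchy in $u$ and converges to some $\underline h \in \mathbb{R}$ with $|h(u, 0) - \underline h|\lesssim e^{-Hu}$. Combining this with $|h(u, r) - h(u, 0)| \leq \int_0^r |\partial_r h(u, s)|\, ds \lesssim e^{-Hu}\int_0^r (1 + s)^{-p}\, ds$, the $r$-integral is $O(1)$ for $p > 1$ (yielding $n(p) = 0$ on the range $2 < p \leq 4$) and is $O((1 + r)^2)$ throughout, giving \eqref{expDecay} with \eqref{lp}. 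The main obstacle will be the calculus analysis underlying \eqref{dhBound}: the competition in the cosmological region between the polynomial weight $(1 + r)^p$ and the characteristic-dependent exponential decay $\exp\!\bigl(-\tfrac{2\Lambda}{3}\int R\bigr)$ is what ultimately fixes both the optimal rate $H$ and the admissible range of $p$.
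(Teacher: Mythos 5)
Your proposal is correct and follows essentially the same route as the paper: an integral identity along the ingoing characteristics solved by iteration (the paper runs a direct contraction argument for the operator $\calF$ of \eqref{defIntF} on rectangles $[0,U]\times[0,R]$ with $R\geq\sqrt{3/\Lambda}$, estimating the contraction constant region by region, rather than a weighted-norm Picard scheme --- a cosmetic difference), the normalization $\calF(1)\equiv 1$ turning the integral identity into a convex average to obtain \eqref{hBound}, and the decoupled transport equation $D\partial_r h=-\tfrac{2\Lambda}{3}r\,\partial_r h$ followed by the same region-by-region calculus for \eqref{dhBound}, with the correct identification of where the constraints $H\leq 2\sqrt{\Lambda/3}$ (saturated on the horizon characteristic) and $0\leq p\leq 4$ (the cosmological region) originate. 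The one genuine divergence is the endgame for \eqref{expDecay}: the paper first bounds $|h-\bh|$, feeds it back into the equation to control $\partial_u h$ at every $r$, and then integrates in $u$; you instead anchor at the centre via $\partial_u h(u,0)=\tfrac12\partial_r h(u,0)$ and integrate $\partial_r h$ radially from $0$ to $r$. Your variant is slightly more economical and in fact yields $n(p)=0$ for all $p>1$ and $n(p)\leq 1$ (up to a logarithm at $p=1$) for $p\leq 1$, which is strictly better than the stated $n(p)=2$ on $0\leq p\leq 2$ --- consistent with the paper's own remark that its powers of $1+r$ are far from optimal.
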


\begin{Remark}
\label{remMainThm}
The powers of $1+r$ obtained are far from optimal. Since we are mainly interested in understanding whether the decay in $u$ obtained by this method is uniform in $r$, we were only careful in computing precise estimates for $2 < p \leq 4$, which is enough to establish uniform decay for $p>2$ (if $p>4$ the $p=4$ result applies, and in fact it does not seem to be possible to obtain a stronger decay in $r$ for $\partial_r h$). For $p\leq 2$ our method does not provide uniform decay, but it is not clear if this is an artifact of these techniques or an intrinsic property of spherical linear waves in de Sitter.
\end{Remark}

\begin{proof}

For $h\in\calC^0([0,\infty)\times[0,\infty))$, we have $r\bh\in\calC^0([0,\infty)\times[0,\infty))$, and so we can define $\calF(h)$ to be the solution to the linear equation
\begin{equation}
\label{defF}
\left\{
\begin{array}{l}
  D(\calF(h)) = -\frac{\Lambda}{3}r(\calF(h)-\bar{h}) \\
  \calF(h)(0,r) = h_0(r)
\end{array}
\right.\;.
\end{equation}
The integral lines of $D$ (incoming light rays in de Sitter), which satisfy
\begin{equation}\label{char}
    \frac{dr}{du}=-\frac{1}{2}\left(1-\frac{\Lambda}{3}r^{2}\right)\;,
\end{equation}
are characteristics of the problem at hand. Integrating~\eqref{defF} along such characteristics we obtain
\begin{equation}
\label{defIntF}
 \calF\left(h\right)(u_1,r_1)=h_0(r(0))e^{-\frac{\Lambda}{3}\int_0^{u_1} r(s)ds}
 +\frac{\Lambda}{3}\int_0^{u_1} r(v)\bh (v,r(v)) e^{-\frac{\Lambda}{3}\int_v^{u_1} r(s)ds} dv\;,
\end{equation}
where, to simplify the notation, we denote the solution to~\eqref{char} satisfying $r(u_1)=r_1$ simply by $s\mapsto r(s)$; we are dropping any explicit reference to the dependence on $(u_1,r_1)$, but it should be noted, in particular, that $r(0)$ is an analytic function of $(u_1,r_1)$.

\newcommand{\our}{\calC^0_{U,R}}
\newcommand{\meuL}{\frac{\Lambda}{3}}

Given $U,R>0$, let $\our$ denote the Banach space $\left(\calC^0\left([0,U]\times[0,R]\right), \|\cdot\|_{\our}\right)$, where
\begin{equation}\label{defOur}
    \|f\|_{\our}=\sup_{(u,r)\in[0,U]\times[0,R]}|f(u,r)|\;.
\end{equation}

Let $r_c:=\sqrt{\frac{3}{\Lambda}}$ be the unique non-negative zero of $1-\frac{\Lambda}{3}r^{2}$ (see~\eqref{dS_Bondi} and~\eqref{char}). The non-decreasing behavior of the characteristics satisfying $r_1\geq r_c$ shows that the restriction of $\calF$ to $\our$ is well defined for all $R\geq r_c$. In fact:

\begin{lem}
Given $U>0$ and $R\geq r_c:=\sqrt{\frac{3}{\Lambda}}$, $\calF$ contracts in $\our$.
\end{lem}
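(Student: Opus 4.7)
The plan is to apply the Banach fixed point theorem, so the task is to produce a genuine contraction estimate in $\our$. Writing the integral representation~\eqref{defIntF} for two elements $h_1,h_2\in\our$ and subtracting, the $h_0$ terms cancel, leaving
$$\calF(h_1)(u_1,r_1)-\calF(h_2)(u_1,r_1)=\meuL\int_0^{u_1}r(v)\bigl(\bh_1(v,r(v))-\bh_2(v,r(v))\bigr)\,e^{-\meuL\int_v^{u_1}r(s)\,ds}\,dv.$$
The averaging operator $h\mapsto\bh$ does not enlarge the sup norm, since $\bigl|\tfrac{1}{r}\int_0^r h(u,s)\,ds\bigr|\le\sup_{s\in[0,r]}|h(u,s)|$; hence $\|\bh_1-\bh_2\|_{\our}\le\|h_1-h_2\|_{\our}$, and pulling this factor outside reduces the task to controlling
$$I(u_1,r_1):=\int_0^{u_1}\meuL\,r(v)\,e^{-\meuL\int_v^{u_1}r(s)\,ds}\,dv.$$

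The key observation is that the integrand is an exact derivative: with $G(v):=-\meuL\int_v^{u_1}r(s)\,ds$ one has $G'(v)=\meuL\, r(v)$, so the integrand equals $\frac{d}{dv}e^{G(v)}$ and the fundamental theorem of calculus yields
$$I(u_1,r_1)=1-e^{-\meuL\int_0^{u_1}r(s)\,ds}.$$

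To upgrade this identity to a uniform contraction factor strictly less than one, it suffices to bound $\int_0^{u_1}r(s)\,ds$ from above on $[0,U]\times[0,R]$. Here the hypothesis $R\ge r_c$ is essential: since $r_c$ is an equilibrium of~\eqref{char}, the backward characteristic through $r_1\ge r_c$ stays in $[r_c,r_1]$, while the backward characteristic through $r_1<r_c$ stays in $[r_1,r_c)$; in either case $r(s)\le R$ on $[0,u_1]$, so $\int_0^{u_1}r(s)\,ds\le RU$ and
$$\|\calF(h_1)-\calF(h_2)\|_{\our}\le\bigl(1-e^{-\meuL RU}\bigr)\|h_1-h_2\|_{\our}.$$

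The remaining obstacle is routine book-keeping to confirm that $\calF$ actually sends $\our$ into itself: continuity of $\calF(h)$ in $(u_1,r_1)$ follows from continuous dependence of solutions of~\eqref{char} on their endpoint, and boundedness follows from the same identity together with $\|\calF(h)\|_{\our}\le\sup_{[0,R]}|h_0|+\|h\|_{\our}$. The real substantive content of the argument is therefore the exact-derivative computation combined with the observation on the geometry of the characteristics near $r=r_c$.
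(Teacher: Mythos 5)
Your proof is correct, and the key step is handled by a genuinely different (and more elementary) argument than the paper's. Both proofs reduce the contraction estimate to showing that $\sigma=\sup\bigl(1-e^{-\frac{\Lambda}{3}\int_0^{u_1}r(s)\,ds}\bigr)<1$ via the same exact-derivative identity and the fact that averaging does not increase the sup norm. At that point the paper solves the characteristic ODE \eqref{char} in closed form in each of the three regions (local, horizon, cosmological), evaluates $e^{-\frac{\Lambda}{3}\int_0^{u_1}r(s)\,ds}$ explicitly via $\tfrac{d}{ds}\ln\sinh$, and bounds it below case by case, the delicate cosmological region giving $\sigma_{cosm}\leq 1-\frac{3}{\Lambda R^2}e^{-\sqrt{\Lambda/3}\,U}$. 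You replace all of this with the qualitative observation that, because $r_c$ is an equilibrium of \eqref{char}, the backward characteristics are trapped in $[0,R]$ whenever $R\geq r_c$, whence $\int_0^{u_1}r(s)\,ds\leq RU$ and $\sigma\leq 1-e^{-\frac{\Lambda}{3}RU}<1$ with no case analysis. For the purposes of the lemma the two bounds are interchangeable: the global solution is built by patching fixed points on rectangles, so only $\sigma<1$ for each fixed $(U,R)$ is needed. What the paper's heavier computation buys is a sharper constant for large $R$ (polynomial rather than exponential degradation in $R$) and, more to the point, the closed-form expression \eqref{charCosmic} and the $\sinh$ manipulations that are reused verbatim later in the proof of the decay estimate \eqref{dhBound}; your shortcut would not by itself supply those. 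Your closing remarks on well-definedness (continuous dependence on the endpoint and the bound $\|\calF(h)\|_{\mathcal{C}^0_{U,R}}\leq\sup_{[0,R]}|h_0|+\|h\|_{\mathcal{C}^0_{U,R}}$) match what the paper disposes of in the sentence preceding the lemma.
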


\begin{proof}

 Fix $U>0$ and $R\geq r_c$. Then

\begin{equation*}
\begin{aligned}
 \|\calF(h_1)-\calF(h_2)\|_{\our}
 &=
 \sup_{(u_1,r_1)\in[0,U]\times[0,R]}\left|\calF(h_1)(u_1,r_1)-\calF(h_2)(u_1,r_1)\right|
 \\
 &\leq
 \sup_{(u_1,r_1)\in[0,U]\times[0,R]}
 \left\{\meuL\int_0^{u_1} r(v)\left|\bh_1(v,r(v))-\bh_2(v,r(v))\right|e^{-\frac{\Lambda}{3}\int_v^{u_1} r(s)ds} dv\right\}
 \\
 &\leq
 \sup_{(u_1,r_1)\in[0,U]\times[0,R]} \left\{\int_0^{u_1} \meuL r(v)e^{-\frac{\Lambda}{3}\int_v^{u_1} r(s)ds} dv\right\}
  \cdot \|\bh_1-\bh_2\|_{\our}
\\
 &\leq
 \sup_{(u_1,r_1)\in[0,U]\times[0,R]} \left\{\left[e^{-\frac{\Lambda}{3}\int_v^{u_1} r(s)ds}\right]_{v=0}^{u_1}\right\}
 \cdot \sup_{(u,r)\in[0,U]\times[0,R]}\left\{\frac{1}{r}\int_0^r|h_1(u,s)-h_2(u,s)|ds\right\}
\\
 &\leq
 \underbrace{\sup_{(u_1,r_1)\in[0,U]\times[0,R]} \left\{1-e^{-\frac{\Lambda}{3}\int_0^{u_1} r(s)ds}\right\}}_{:=\sigma}
 \cdot \|h_1-h_2\|_{\our}\;.
\end{aligned}
\end{equation*}

 Throughout, to obtain estimates, and in particular to estimate $\sigma$, one needs to consider three (causally) separate regions, naturally corresponding to the bifurcations of~\eqref{char}: the local region ($r<r_c$), the cosmological horizon ($r=r_c$), and the cosmological region ($r>r_c$). However, since the computations are similar we will only present the details concerning the most delicate case, $r>r_c$\;.

The solution to~\eqref{char} satisfying $r_1=r(u_1)>r_c:=\sqrt{\frac{3}{\Lambda}}$, is given by
\begin{equation}
\label{charCosmic}
r(u)=\sqrt{\frac{3}{\Lambda}}\coth{\left(\frac{1}{2}\sqrt{\frac{\Lambda}{3}}(c-u)\right)}\;,
\end{equation}
where 
\[
c=u_1 + 2\sqrt{\frac{3}{\Lambda}}\arcoth\left(\sqrt{\frac{\Lambda}{3}} \, r_1 \right)
\]
(in particular $c>u_1$, and so \eqref{charCosmic} is well defined for $0 \leq u \leq u_1$). It follows that
\begin{equation*}
\begin{aligned}
 -\meuL\int^{u_1}_{0}r(s)ds
 &=
 \int^{u_1}_{0}-\sqrt{\frac{\Lambda}{3}}\coth{\left(\frac{1}{2}\sqrt{\frac{\Lambda}{3}}(c-s)\right)}ds
 \\
 &=
 \int^{u_1}_{0}2\frac{d}{ds}\ln{\left[\sinh{\left(\frac{1}{2}\sqrt{\frac{\Lambda}{3}}(c-s)\right)}\right]}ds
=
 \ln{\left[\frac{\sinh{\left(\frac{1}{2}\sqrt{\frac{\Lambda}{3}}(c-u_1)\right)}}
 {\sinh{\left(\frac{1}{2}\sqrt{\frac{\Lambda}{3}}c\right)}}\right]^2}\;,
\end{aligned}
\end{equation*}
and consequently
\begin{equation*}
\begin{aligned}
e^{-\meuL\int^{u_{1}}_{0}r(s)ds}&=\frac{\sinh^2{\left(\frac{1}{2}\sqrt{\frac{\Lambda}{3}}(c-u_1)\right)}}{\sinh^2{\left(\frac{1}{2}\sqrt{\frac{\Lambda}{3}}c\right)}}
                                 =\frac{\cosh^2{\left(\frac{1}{2}\sqrt{\frac{\Lambda}{3}}(c-u_1)\right)}}{\sinh^2{\left(\frac{1}{2}\sqrt{\frac{\Lambda}{3}}c\right)}\coth^2{\left(\frac{1}{2}\sqrt{\frac{\Lambda}{3}}(c-u_1)\right)}} \\
                                &=\left[\frac{\cosh{\left(\alpha(c-u_1)\right)}}{\sinh{\left(\alpha c\right)}}\frac{1}{2\alpha r_1}\right]^2
                                 =\left[\frac{e^{\alpha(c-u_1)}+e^{-\alpha(c-u_1)}}{e^{\alpha c}-e^{-\alpha c}}\right]^{2}\frac{1}{4\alpha^2 r^{2}_1}  \\
                                &\geq\frac{e^{-2\alpha u_1}}{4\alpha^2 r^2_1},
\end{aligned}
\end{equation*}
where $\alpha := \frac{1}{2}\sqrt{\frac{\Lambda}{3}}$. Define
\begin{equation*}
\begin{aligned}
 \sigma_{cosm}\left(U,R\right)
 &:=
 \sup_{(u_1,r_1)\in[0,U]\times(r_c,R]}\left(1-e^{-\meuL\int^{u_{1}}_{0}r(s)ds}\right)
 \\
 &\leq
 \sup_{(u_1,r_1)\in[0,U]\times(r_c,R]}\left(1-\frac{e^{-2\alpha u_1}}{4\alpha^2 r^2_1}\right)\leq\left(1-\frac{3}{\Lambda}\frac{e^{-\sqrt{\frac{\Lambda}{3}}U}}{R^2}\right)<1\;.
\end{aligned}
\end{equation*}
Similar computations give
$$\sigma_{loc}:=
 \sup_{(u_1,r_1)\in[0,U]\times[0,r_c)}\left(1-e^{-\meuL\int^{u_{1}}_{0}r(s)ds}\right)\leq\left(1-\frac{e^{-\sqrt{\frac{\Lambda}{3}}U}}{4}\right)<1\;,$$
for the local region,
and
$$\sigma_{hor}:=\sup_{u_1\in[0,U]}\left(1-e^{-\meuL\int^{u_{1}}_{0}r_cds}\right)
\leq 1-e^{-\sqrt{\frac{\Lambda}{3}}U}<1\;,$$
along the cosmological horizon. Finally $\sigma=\max\{\sigma_{loc},\sigma_{hor},\sigma_{cosm}\}<1$, and the statement of the lemma follows.

\end{proof}

By the contraction mapping theorem \cite{GT}, given $U>0$ and $R\geq r_c$, there exits a unique fixed point $h_{U,R}\in\our$ of $\calF$. Uniqueness guarantees that in the intersection of two rectangles $[0,U_1]\times[0,R_1]\cap[0,U_2]\times[0,R_2]$ the corresponding $h_{U_1,R_1}$ and $h_{U_2,R_2}$ coincide. Consequently, there exists a unique continuous map $h:[0,\infty)\times[0,\infty) \to \bbR$ such that $h=\calF(h)$, i.e.,
\begin{equation}
\label{fixedPoint}
 h(u_1,r_1)=h_0(r(0))e^{-\frac{\Lambda}{3}\int_0^{u_1} r(s)ds}
 +\frac{\Lambda}{3}\int_0^{u_1} r(v)\bh (v,r(v)) e^{-\frac{\Lambda}{3}\int_v^{u_1} r(s)ds} dv\;,
\end{equation}
in $[0,\infty)\times[0,\infty)$. Continuity of $h$ implies continuity of $r\bh$, so we are allowed to differentiate~\eqref{fixedPoint} in the direction of $D$, which proves that $h$ is in fact a ($\calC^0$) solution of~\eqref{mainEqChar}. Existence and uniqueness in $\calC^0\left([0,\infty)\times[0,\infty)\right)$ follow.

\vspace{0,2cm}

To see that a solution of~\eqref{mainEqChar} is as regular as its initial condition assume that $h_0\in\calC^{k+1}$, $k\geq 0$, and start by noticing that if $h\in\calC^k$ then $r\bh$ and $\partial_r(r\bh)$ are also in $\calC^k$. In particular for $h\in\calC^0$ we can differentiate~\eqref{fixedPoint} with respect to $u_1$ to obtain
\begin{equation}\label{dhdu}
\begin{aligned}
\frac{\partial h}{\partial u_1}
&=
\frac{\partial}{\partial u_1}\left(h_0(r(0))e^{-\frac{\Lambda}{3}\int_0^{u_1} r(s)ds}\right)+\meuL\left(r\bh\right)(u_1,r_1)
\\
&+
\meuL \int_0^{u_1} \frac{\partial (r\bh)}{\partial r} (v,r(v)) \frac{\partial r}{\partial u_1} (v) e^{-\frac{\Lambda}{3}\int_v^{u_1} r(s)ds} dv
\\
&+
\frac{\Lambda}{3}\int_0^{u_1} r(v)\bh (v,r(v)) \frac{\partial}{\partial u_1}\left(e^{-\frac{\Lambda}{3}\int_v^{u_1} r(s)ds}\right)\;.
\end{aligned}
\end{equation}
This last expression shows that $h_0\in\calC^{k+1}$ and $h\in\calC^k$ implies $\frac{\partial h}{\partial u_1}\in\calC^{k}$. The same reasoning works for the derivative with respect to $r_1$ (note that although $r(v)$ is given by different expressions according to whether $r_1<r_c$, $r_1=r_c$ or $r_1>r_c$, it is still the solution of a smooth ODE satisfying $r(u_1)=r_1$, and as such depends smoothly on the data $(u_1,r_1)$). Consequently,  if $h_0\in C^{k+1}$ and $h\in\calC^k$, then $h$ is in fact in $C^{k+1}$ and the regularity statement follows by induction.

\vspace{0,2cm}

To establish~\eqref{hBound} first note that:

\begin{lem}
\label{calFdecrease}
If $\|h_0\|_{\calC^0}\leq y_0$ and $\|h\|_{\calC^0}\leq y_0$, for some $y_0\geq 0$, then $\|\calF(h)\|_{\calC^0}\leq y_0$.
\end{lem}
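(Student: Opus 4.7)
My plan is to exploit the integral representation~\eqref{fixedPoint} (equivalently~\eqref{defIntF}), which already exhibits $\calF(h)$ as a weighted combination of the initial datum and an average of $\bh$ along the characteristic. The key point is that the coefficients form an exact partition of unity, so the bound propagates with no loss.

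First, I would start from
\[
 \calF(h)(u_1,r_1) = h_0(r(0))\,e^{-\meuL\int_0^{u_1} r(s)ds} + \meuL\int_0^{u_1} r(v)\,\bh(v,r(v))\,e^{-\meuL\int_v^{u_1} r(s)ds}\,dv\;,
\]
and apply the triangle inequality. The hypothesis $\|h_0\|_{\calC^0}\leq y_0$ handles the first term. For the second term, I would observe that since $\bh(v,\rho)=\frac{1}{\rho}\int_0^{\rho} h(v,s)\,ds$ is a radial average, $\|h\|_{\calC^0}\leq y_0$ immediately gives $\|\bh\|_{\calC^0}\leq y_0$, so $|\bh(v,r(v))|\leq y_0$ along the characteristic.

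The decisive step, and the only thing one really has to verify, is the identity
\[
 \meuL\int_0^{u_1} r(v)\,e^{-\meuL\int_v^{u_1} r(s)ds}\,dv = \left[e^{-\meuL\int_v^{u_1} r(s)ds}\right]_{v=0}^{u_1} = 1 - e^{-\meuL\int_0^{u_1} r(s)ds}\;,
\]
which is just the fundamental theorem of calculus applied to the $v$-antiderivative $e^{-\meuL\int_v^{u_1} r(s)ds}$ (this is exactly the manipulation already performed in the contraction estimate for $\sigma$). Combining, one gets
\[
 |\calF(h)(u_1,r_1)| \leq y_0\, e^{-\meuL\int_0^{u_1} r(s)ds} + y_0\left(1 - e^{-\meuL\int_0^{u_1} r(s)ds}\right) = y_0\;,
\]
uniformly in $(u_1,r_1)\in[0,\infty)\times[0,\infty)$, which is the claim.

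There is no real obstacle here: the lemma is essentially a ``convex combination'' statement, and the only thing to notice is that the weight $\meuL\, r(v)\, e^{-\meuL\int_v^{u_1} r(s)ds}\,dv$ is a probability density on $[0,u_1]$ with mass $1-e^{-\meuL\int_0^{u_1} r(s)ds}$ complementary to the Dirac-like weight $e^{-\meuL\int_0^{u_1} r(s)ds}$ attached to $h_0(r(0))$. This same computation, applied to the unique fixed point $h$ of $\calF$, will subsequently yield~\eqref{hBound} (with equality coming from the freedom to take $y_0=\|h_0\|_{\calC^0}$ and to test at $u=0$).
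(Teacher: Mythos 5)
Your proposal is correct and follows essentially the same route as the paper: triangle inequality on the integral representation, the observation that the radial average $\bh$ inherits the bound on $h$, and the exact evaluation of the weight integral showing the coefficients sum to $1$ (the paper notes this identity can be seen either by the direct computation you perform or by observing that $\calF(1)\equiv 1$). No gaps.
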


\begin{proof}
From~\eqref{defIntF} we see that
$$
\begin{aligned}
|\calF(h)(u_1,r_1)|
&\leq
\|h_0\|_{\calC^0}\;e^{-\frac{\Lambda}{3}\int_0^{u_1} r(s)ds}
+\|\bh\|_{\calC^0}\;\meuL \int_0^{u_1} r(v) e^{-\frac{\Lambda}{3}\int_v^{u_1} r(s)ds} dv
\\
&\leq y_0 \underbrace{\left(e^{-\frac{\Lambda}{3}\int_0^{u_1} r(s)ds}
+\meuL \int_0^{u_1} r(v) e^{-\frac{\Lambda}{3}\int_v^{u_1} r(s)ds} dv\right)}_{\equiv1} =y_0\;.
\end{aligned}
$$
The last step follows by a direct computation, as before, or by noticing that since $h\equiv1$ is a solution to~\eqref{mainEqChar}, with $h_0\equiv1$, one has $\calF(1)\equiv1$.

\end{proof}

Now consider the sequence
$$
\left\{
\begin{array}{l}
h_0(u,r)=h_0(r) \\
h_{n+1}=\calF(h_n)
\end{array}
\right.\;.
$$
We have already established that, for any $U>0$ and $R\geq r_c$, $h_n$ converges in $\our$ to $h$, the solution of~\eqref{mainEqChar}. Lemma~\ref{calFdecrease} then tells us that
$$\|h_n\|_{\our}\leq \|h_n\|_{\calC^0} \leq \|h_0\|_{\calC^0}\;, \quad \text{ and so } \quad \|h\|_{\our}=\lim_{n \to \infty} \|h_n\|_{\our}\leq \|h_0\|_{\calC^0}\;.$$
Since this holds for arbitrarily large $U$ and $R$, the bound \eqref{hBound} follows.

\vspace{0,2cm}

We will now show that the estimate~\eqref{dhBound} holds. First of all if $h\in\calC^1$ we see that $Dh$ and $\partial_r Dh$  are both continuous, and consequently $D\partial_r h$ exists and its equal to $\partial_r Dh+[D,\partial_r]h$~\footnote{Here we are using the following generalized version of the Schwarz Lemma: if $X$ and $Y$ are two nonvanishing $\calC^1$ vector fields in $\bbR^2$ and $f$ is a $\calC^1$ function such that $X\cdot(Y\cdot f)$ exists and is continuous then $Y\cdot(X\cdot f)$ also exists and is equal to $X\cdot(Y\cdot f) - [X,Y]\cdot f$.}. Using this last fact and equations~\eqref{Definitionh} while differentiating~\eqref{mainEq} with respect to $r$ we obtain an evolution equation for $\partial_rh$:
\begin{equation}
\label{Ddrh}
 D\partial_{r}h=-2\frac{\Lambda}{3}r\,\partial_{r}h\;.
\end{equation}
Integrating the last equation along the (ingoing) characteristics, as before, yields
\newcommand{\meuDL}{\frac{2\Lambda}{3}}
\begin{equation}
\partial_{r}h(u_{1},r_{1})=\partial_{r}h_{0}(r_{0})e^{-\meuDL\int_0^{u_1}r(s)ds}.
\end{equation}
It is then clear that initial data controls the supremum norm of $\partial_rh$. In fact,  let
$$d_0=\|(1+r)^p\partial_rh_0\|_{\calC^0}\;.$$
In the cosmological region ($r>r_c$),
one has, after recalling~\eqref{charCosmic},
\begin{equation}\label{dh1}
\begin{aligned}
\left|(1+r_{1})^{p}e^{Hu_{1}}\partial_{r}h(u_1,r_1)\right|
&=
\left|(1+r_{1})^{p}e^{Hu_{1}}\partial_{r}h_{0}(r_0)e^{-\meuDL\int_0^{u_1}r(s)ds}\right|
\\
&\leq
d_0\left(\frac{1+r_1}{1+r_{0}}\frac{\sinh{\left(\alpha(c-u_1)\right)}}{\sinh{\left(\alpha c\right)}}\right)^{p}e^{Hu_1}\left(\frac{\sinh{\left(\alpha(c-u_1)\right)}}{\sinh{\left(\alpha c\right)}}\right)^{4-p},
\end{aligned}
\end{equation}
where $\alpha=\frac{1}{2}\sqrt{\meuL}$ as before. Now, since $c-u_1\leq c$, then $e^{-2\alpha(c-u_1)}\geq e^{-2\alpha c}$, and
\begin{equation}
\begin{aligned}
\frac{\sinh{\left(\alpha(c-u_1)\right)}}{\sinh{\left(\alpha c\right)}}&=\frac{e^{\alpha(c-u_1)}-e^{-\alpha(c-u_1)}}{e^{\alpha c}-e^{-\alpha c}} \\
                                                                      &=e^{-\alpha u_1}\frac{1-e^{-2\alpha(c-u_1)}}{1-e^{-2\alpha c}} \\
                                                                      &\leq e^{-\alpha u_1}\;.
\end{aligned}
\label{dh2}
\end{equation}
Also
\begin{equation}\label{dh3}
\begin{aligned}
\frac{1+r_1}{1+r_0}\frac{\sinh{\left(\alpha(c-u_1)\right)}}{\sinh{\left(\alpha c\right)}}
&=
\frac{1+\frac{1}{2\alpha}\coth{\left(\alpha(c-u_1)\right)}}{1+\frac{1}{2\alpha}
\coth{\left(\alpha c\right)}}\frac{\sinh{\left(\alpha(c-u_1)\right)}}{\sinh{\left(\alpha c\right)}}
\\
&=
\frac{\sinh\left(\alpha(c-u_1)\right)+\frac{1}{2\alpha}\cosh\left(\alpha(c-u_1)\right)}
{\sinh\left(\alpha(c)\right)+\frac{1}{2\alpha}\cosh\left(\alpha c\right)}
\\
&\leq
\frac{1+\frac{1}{2\alpha}}{\frac{1}{2\alpha}}\cdot\frac{\cosh{\left(\alpha(c-u_1)\right)}}{\cosh{\left(\alpha c\right)}}
\\
&\leq(2\alpha+1)\,2e^{-\alpha u_1}.
\end{aligned}
\end{equation}
Therefore, if $0\leq p\leq 4$ and $H\leq4\alpha=2\sqrt{\Lambda/3}$, we plug~\eqref{dh2} and~\eqref{dh3} into~\eqref{dh1} to obtain
\begin{equation}
\begin{aligned}
\sup_{(u_1,r_1)\in[0,U]\times[r_c,R]}\left|(1+r_1)^{p}e^{Hu_1}\partial_{r}h(u_{1},r_{1})\right|
&\leq
d_{0}\sup_{(u_1,r_1)\in[0,U]\times[r_c,R]}\left|2^{p}(2\alpha+1)^{p}e^{(H-4\alpha)u_1}\right|
\\
&\leq
2^{p}(2\alpha+1)^{p}d_0\;.
\end{aligned}
\end{equation}
Similar, although simpler, computations yield
\begin{equation}
\sup_{(u_1,r_1)\in[0,U]\times[0,r_c]}\left|(1+r_1)^{p}e^{Hu_1}\partial_{r}h(u_{1},r_{1})\right|
\leq 16 \sup_{r_1\in[0,r_c]}\left|(1+r_1)^{p}\partial_{r}h_0(r_1)\right| \leq 16 d_0
\end{equation}
%
for the local region. This proves~\eqref{dhBound}.

\vspace{0,2cm}

To finish the proof of Theorem~\ref{mainThm} all is left is to establish the uniform decay statement~\eqref{expDecay}. Start with
\begin{equation*}
\begin{aligned}
\left|h(u,r)-\bar{h}(u,r)\right|
&\leq
\frac{1}{r}\int_0^r\left|h(u,r)-h(u,s)\right|ds
\\
&\leq
\frac{1}{r}\int_0^r\int_s^r\left|\partial_{\rho}h(u,\rho)\right|d\rho\, ds
\\
&\lesssim
\frac{1}{r}\int_0^r\int_s^r\frac{e^{-Hu}}{(1+\rho)^p}d\rho\, ds
\lesssim
\left\{
\begin{array}{lcc}
\frac{e^{-Hu}}{1+r} & , &  2< p\leq 4
\\ \\
r e^{-Hu} & ,  & 0\leq p \leq 2
\end{array}
\right.\;.
\end{aligned}
\end{equation*}
These estimates for $2 < p \leq 4$ are obtained by direct computation; they seem to be the optimal results which follow from this method. The remaining cases, with the exception of $p=0$, are far from optimal. In fact, since we are mainly interested in a qualitative analysis, namely if the decay obtained is or not uniform in $r$ (see Remark~\ref{remMainThm}), the results for $p\leq 2$ were obtained simply using $\frac1{(1+r)^p} \leq 1$.

Using~\eqref{mainEq} we then see that
\begin{equation*}
\begin{aligned}
\left|\partial_{u}h\right|
&=
\left|D h + \frac{1}{2}\left(1-\frac{\Lambda}{3}r^{2}\right)\partial_{r}h\right|
\\
&\leq \left|-\frac{\Lambda}{3}r\left(h-\bar{h}\right)\right|+\frac{1}{2}\left|\left(1-\frac{\Lambda}{3}r^{2}\right)\partial_{r}h\right|
\lesssim
(1+r)^{n(p)} e^{-Hu}\;,
\end{aligned}
\end{equation*}
with $n(p)$ as in the statement of the theorem.

Now since $\partial_{u}h$ is integrable with respect to $u$, by the fundamental theorem of calculus, we see that there exists

$$\lim_{u\rightarrow\infty}h(u,r)=\underline{h}(r)\;.$$
But
$$
\begin{aligned}
|\underline{h}(r_2)-\underline{h}(r_1)|
&=
\lim_{u\rightarrow\infty}|h(u,r_2)-h(u,r_1)|
\\
&\leq
\lim_{u\rightarrow\infty}\left|\int_{r_1}^{r_2}|\partial_{r}h(u,r)|dr\right|
\\
&\lesssim \lim_{u\rightarrow\infty} |r_2-r_1|e^{-Hu}=0\;,
\end{aligned}
$$
and, consequently, there exists $\underline{h}\in\mathbb{R}$ such that
$$\underline{h}(r)\equiv\underline{h}\;.$$
Finally
$$
\begin{aligned}
\left|h(u,r)-\underline{h}\right|
&\leq
\int_u^{\infty}\left|\partial_vh(v,r)\right|dv
\\
&\lesssim
\int_u^{\infty}(1+r)^{n(p)}e^{-Hv}dv\lesssim (1+r)^{n(p)}e^{-Hu}\;.
\end{aligned}
$$

\end{proof}

\begin{Remark}
The same calculation shows that given $R>0$ the solutions of \eqref{mainEqChar} satisfy $|h(u,r)-\underline{h}|\lesssim e^{-Hu}$ uniformly for $r \in [0,R]$, even if $\|(1+r)^p \partial_rh_0\|_{\mathcal{C}^0}$ is not finite.
\end{Remark}

\section{Boundedness and exponential uniform decay for spherical linear waves in de Sitter}

We now translate part of the results in Theorem~\eqref{mainThm} back into results concerning  linear waves in de Sitter.

\begin{thm}

Let $(M,g)$ be de Sitter spacetime with cosmological constant $\Lambda$ and $(u,r,\theta,\varphi)$ Bondi coordinates as in Section~\ref{sectionBondi}.
Let $\phi=\phi(u,r)\in\calC^2\left([0,\infty)\times[0,\infty)\right)$ be a
solution~\footnote{Alternatively one might consider a general solution and infer results about its zeroth spherical harmonic.} to
$$
\square_g\phi=0\;.
$$

Then
\begin{equation}\label{phiBound}
\left|\phi\right|\leq \sup_{r\geq0}\left|\partial_r\left(r\phi(0,r)\right)\right|\;.
\end{equation}

Moreover, if for some $0\leq p \leq 4$
\begin{equation}\label{condDecay}
\sup_{r\geq 0} \left|(1+r)^p\frac{\partial^2}{\partial r^2}\left(r\phi(0,r)\right)\right|<\infty\;,
\end{equation}
then there exists $\underline{\phi}\in\mathbb{R}$ such that, for $H\leq 2 \sqrt{\frac{\Lambda}{3}}$ ,
\begin{equation}\label{phiDecay}
\left|\phi(u,r)-\underline{\phi}\right|\lesssim (1+r)^{n(p)}e^{-Hu}\;,
\end{equation}
where
\begin{equation}\label{lp}
n(p)=\left\{
\begin{array}{ccc}
0 & , & 2 <p \leq 4 \\
2   & , & 0\leq p \leq 2
\end{array}\right.\;.
\end{equation}
\end{thm}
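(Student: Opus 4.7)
The plan is to translate everything through the change of variable $h := \partial_r(r\phi)$ introduced in Section~\ref{sectionBondi} and then invoke Theorem~\ref{mainThm}. First I would set up the correspondence. Since $\phi \in \calC^2$, the function $h$ lies in $\calC^1([0,\infty)\times[0,\infty))$, and since $\phi$ is continuous at the origin we have $\lim_{r\to 0} r\phi = 0$; hence the relations~\eqref{Definitionh} apply, in particular $\phi = \bar{h}$. The derivation in Section~\ref{sectionBondi} then shows that $h$ satisfies~\eqref{mainEq} with the $\calC^1$ initial datum $h_0(r) := \partial_r(r\phi(0,r))$, so by the uniqueness part of Theorem~\ref{mainThm}, $h$ is exactly the solution produced there.

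The boundedness estimate~\eqref{phiBound} is then immediate: using $\phi = \bar{h}$ together with~\eqref{hBound},
$$
|\phi(u,r)| \;=\; |\bar{h}(u,r)| \;\leq\; \frac{1}{r}\int_0^r |h(u,s)|\,ds \;\leq\; \|h\|_{\calC^0} \;=\; \|h_0\|_{\calC^0} \;=\; \sup_{r\geq 0}|\partial_r(r\phi(0,r))|\;.
$$
For~\eqref{phiDecay}, note that the hypothesis~\eqref{condDecay} is precisely $\|(1+r)^p\partial_r h_0\|_{\calC^0} < \infty$. Hence for $0\leq p\leq 4$ and $H\leq 2\sqrt{\Lambda/3}$, Theorem~\ref{mainThm} produces $\underline{h}\in\mathbb{R}$ such that $|h(u,r)-\underline{h}|\lesssim (1+r)^{n(p)} e^{-Hu}$. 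Setting $\underline{\phi}:=\underline{h}$ and again using $\phi = \bar{h}$, we get
$$
|\phi(u,r)-\underline{\phi}| \;=\; \left|\frac{1}{r}\int_0^r (h(u,s)-\underline{h})\,ds\right| \;\lesssim\; \frac{e^{-Hu}}{r}\int_0^r (1+s)^{n(p)}\,ds\;.
$$

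The only substantive check is the elementary inequality $\frac{1}{r}\int_0^r (1+s)^{n(p)}\,ds \lesssim (1+r)^{n(p)}$, which holds trivially for $n(p)=0$ (the left-hand side equals $1$) and by direct polynomial integration for $n(p)=2$ (a short computation gives $\frac{(1+r)^3-1}{3r} = \frac{r^2+3r+3}{3} \leq (1+r)^2$). The main (and mildly subtle) point is thus that the radial averaging $h \mapsto \bar{h}$ does not cost an additional power of $r$; once this is verified, the decay of $\phi$ to $\underline{\phi}$ matches that of $h$ to $\underline{h}$ up to a multiplicative constant, and~\eqref{phiDecay} follows.
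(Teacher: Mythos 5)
Your proposal is correct and follows the same route as the paper, which simply reduces the statement to Theorem~\ref{mainThm} via the substitution $h=\partial_r(r\phi)$, $\phi=\bar h$; you have merely filled in the details (identification of $h_0$ with $\partial_r(r\phi(0,r))$, the averaging estimate $\frac1r\int_0^r(1+s)^{n(p)}\,ds\leq(1+r)^{n(p)}$) that the paper leaves as ``the results easily follow.''
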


\begin{proof}
Since $\phi$ is a spherically symmetric $\calC^2$ solution of~\eqref{onda} we saw in Section~\ref{sectionBondi} that $h=\partial_r(r\phi)$ satisfies~\eqref{mainEq}, with $\phi=\bh$. Applying Theorem~\eqref{mainThm} the results easily follow.
\end{proof}

\begin{Remark}
Note that the bound \eqref{phiBound} for $\phi$, unlike the bound \eqref{hBound} for $h$, depends on the derivative of the initial data, so we do have ``loss of derivatives'' in this case.
\end{Remark}

\begin{Remark}
Once more that the powers of $1+r$ obtained are far from optimal, see Remark~\ref{remMainThm}.
\end{Remark}

\begin{Remark}
It should be emphasized that the boundedness and decay results are logically independent. In fact~\eqref{phiBound} follows from~\eqref{hBound}, which in turn is a consequence of a fortunate trick (see proof of Lemma~\ref{calFdecrease}) relying on the non-positivity of the factor of the zeroth order term in~\eqref{mainEq} (here, non-negativity of $\Lambda$) and the fact that $\calF$~\eqref{defIntF} is a contraction in appropriate function spaces; in some sense one is required to prove existence and uniqueness of~\eqref{mainEqChar} in the process. That is no longer the case for obtaining~\eqref{expDecay}, from which uniform decay of $\phi$ follows.
\end{Remark}

\section*{Acknowledgements}

This work was supported by projects PTDC/MAT/108921/2008 and CERN/FP/116377/2010, and by CMAT, Universidade do Minho, and CAMSDG, Instituto Superior T\'ecnico, through FCT plurianual funding. AA thanks the Mathematics Department of Instituto Superior T\'ecnico (Lisbon), where this work was done, for hospitality, and FCT for grant SFRH/BD/48658/2008.

\end{document}